%
\documentclass[runningheads]{llncs}
\usepackage{graphicx}
\usepackage{amsmath}
\usepackage{graphicx}
\usepackage{comment}
\usepackage{todonotes}
\bibliographystyle{splncs04}

%

\begin{document}

\title{Space Complexity of Streaming Algorithms on Universal Quantum Computers}

\titlerunning{Streaming Algorithms on Universal Quantum Computers}
%
\author{Yanglin Hu \and
Darya Melnyk \and
Yuyi Wang       \and
Roger Wattenhofer}
\authorrunning{Y. Hu et al.}
%
\institute{ETH Zurich, Switzerland
 \\\email{\{yahu,dmelnyk,yuwang,wattenhofer\}@ethz.ch}}
\maketitle              
\begin{abstract}
Universal quantum computers are the only general purpose quantum computers known that can be implemented as of today. These computers consist of a classical memory component which controls the quantum memory. In this paper, the space complexity of some data stream problems, such as PartialMOD and Equality, is investigated on universal quantum computers. The quantum algorithms for these problems are believed to outperform their classical counterparts. Universal quantum computers, however, need classical bits for controlling quantum gates in addition to qubits. Our analysis shows that the number of classical bits used in quantum algorithms is equal to or even larger than that of classical bits used in corresponding classical algorithms. These results suggest that there is no advantage of implementing certain data stream problems on universal quantum computers instead of classical computers when space complexity is considered.

\keywords{Streaming algorithm \and Universal quantum computer \and Space complexity \and Solovay-Kitaev algorithm.}
\end{abstract}

\section{Introduction}
In the past two decades, scientists have made significant progress in the field of quantum computation. Quantum computer protocols based on different physical principles have been constructed and manufactured. Despite this progress, large-scale quantum computers are still not available. 

According to the no-programming Theorem \cite{PhysRevLett.79.321}, a quantum-controlled quantum computer is not better than a classically controlled quantum computer. Therefore, a modern quantum computer consists of a large classical memory controlling a small quantum memory. The limited quantum memory poses great challenges to physicists and computer scientists. In particular, one must decide how to use this limited quantum memory efficiently. One possible way is to build larger-scale quantum computers. Another way is to introduce algorithms that require a small quantum memory, but a large classical memory. In this work, we address the latter case for a special class of problems -- the data stream problems.

Data stream problems process data streams where the input data comes at a high rate. The massive input data challenges communication, computation, and storage. In particular, one may not be able to transmit, compute and store the whole input. For such problems, classical and quantum algorithms have been proposed with the aim to reduce space complexity. On quantum computers, such algorithms usually use polynomially or even exponentially less quantum memory than their classical counterparts using classical memory. 

However, quantum algorithms are generally performed on a universal quantum computer. Note that for some structures quantum gates can change continuously by slowly varying some physical parameters, and it seems that one should use a continuous set of quantum gates to describe them. However, by considering the uncertainty principle, physical parameters can only be measured with errors. Due to these errors, quantum gates with slightly different parameters can therefore often not be distinguished and should be regarded as the same quantum gate. This brings us back to a discrete set of quantum gates and a universal quantum computer. On such a universal quantum computer, only a finite set of quantum gates -- the universal quantum gates -- can be used directly. Other quantum gates are approximated by quantum gate array to a certain accuracy. 

According to the no-programming theorem, universal quantum computers need extra memory, in particular, they need classical memory in order to store the program for the desired quantum gate array. Therefore, the length of the desired quantum gate array would determine the length of the program, which requires extra memory. In this work, we include the extra memory for programs when considering the space complexity, and show that if the extra memory is taken into account, the space complexity of the proposed quantum algorithm for the PartialMOD problem is approximately equal to the space complexity of the respective classical algorithms and that for the Equality problem is even worse. This way, the considered streaming algorithm on universal quantum computers have no advantage over their classical counterparts. Note that our result does not imply that these problems cannot be solved efficiently in a different model. Instead, it suggests that different problems may be solved more efficiently in some particular model, but not in others. We therefore see our result as an inspiration to consider quantum algorithms with respect to the framework in which they can be implemented.

\section{Related Work}

Classical data stream problems have been first formalized and popularized by Alon et al.\ \cite{Alon:1996:SCA:237814.237823} in order to estimate the frequency moment of a sequence using as little memory as possible. The PartialMOD \cite{AMBAINIS2012289} and Equality-like problems \cite{Yao:1979:CQR:800135.804414} are well-known examples of problems in this class. For the PartialMOD problem, Ambainis et al.\ \cite{AMBAINIS2012289} proved a tight bound of $\log p$ bits in the deterministic setting. Ablayev et al. \cite{ABLAYEV2005145,ablayev2008complexity} proved a tight bound of $n$ bits for the deterministic classical streaming algorithms computing Equality problems. 

For the quantum version of data stream problems, Watrous \cite{Watrous:1998:SQC:928244} proved the well-known result that the complexity class PrSPACE(s) is equal to the complexity class PrQSPACE(s), which implies that to some extent, quantum algorithms are not better than classical algorithms with respect to their space complexity. For the PartialMOD problem, Ablayev et al.\ \cite{10.1007/978-3-319-09704-6_6} proposed a quantum algorithm that requires only $1$ qubit while classical algorithms need $\log p$ bits. Ablayev et al.\ \cite{ablayev2008complexity} later proposed quantum streaming algorithms for Equality Boolean functions. Their results show that some problems have both logarithmic or better quantum algorithms, whereas at least a logarithmic number of bits is needed for classical algorithms. Based on both previous results, Khadiev et al. \cite{DBLP:journals/corr/abs-1709-08409,DBLP:journals/corr/abs-1710-09595,DBLP:journals/corr/abs-1802-05134} proposed quantum stream algorithms with constant space complexity, which is better than classical streaming algorithms that require polylogarithmically many bits. Le Gall \cite{LeGall2009} also investigated a certain variation of the Equality problem and proposed a quantum algorithm with exponentially lower space complexity (both quantum and classical) than the corresponding classical algorithm. 

The field of communication complexity also investigated Equality problems. Buhrman et al.\ \cite{PhysRevLett.87.167902} introduced quantum fingerprinting and proposed to use it in communication theory. They chose the Equality problem as an example in their paper. Recently, Guan et al.\ \cite{PhysRevLett.116.240502} managed to realize the above progress experimentally. 

In our paper, we focus on the space complexity of data stream problems on universal quantum computers. For such computers, the Solovay-Kitaev algorithm \cite{Dawson:2006:SA:2011679.2011685,Kitaev:2002:CQC:863293,doi:10.1063/1.1495899} states that any operator can be approximated to an accuracy of $\epsilon$ by $\log^c \frac {1}{\epsilon}$ quantum gates from a finite set of gates. Different versions of the Solovay-Kitaev algorithm consider different values of $c$. In \cite{Dawson:2006:SA:2011679.2011685}, Dawson and Nielsen introduced a version with $c\approx 4$. Kitaev et al. \cite{Kitaev:2002:CQC:863293} proposed a version with $c\approx 2$, and Harrow et al. \cite{doi:10.1063/1.1495899} finally proved a lower bound of $c=1$. Moreover, they showed that the corresponding algorithm exists but cannot be given explicitly.

\section{Background}
We will start by describing the notation for quantum computation we use in this paper. We will introduce fundamental concepts of quantum physics using the Dirac notation, and also present the Bloch sphere model, which is a geometric way to comprehend quantum algorithms. In Section \ref{sec:solovay-kitaev}, we will then clarify the Solovay-Kitaev algorithm \cite{Dawson:2006:SA:2011679.2011685} which gives a way to efficiently approximate any desired operation on a universal quantum computer with a finite set of operations. Finally, in Section \ref{sec:no-programming}, we will explain the quantum no-programming theorem \cite{PhysRevLett.79.321}. This theorem points out that we must use orthonormal quantum states to perform different operators with deterministic quantum gate arrays and as such it forms the basis of a classically controlled quantum computer.  

\subsection{Notation}\label{sec:notation}
Let $|i\rangle $ denote the $i$-th classical state of the (complete orthonormal) computational basis of a Hilbert space. We can write a pure state of a quantum memory as a column vector $|\psi\rangle = \left(\alpha_1,...,\alpha_n\right)^T =\sum_{i=1}^{n}{\alpha_i|i\rangle}$. Its norm satisfies $\langle \psi|\psi\rangle = \sum_{i=1}^{n}{|\alpha_i|^2}=1$, 
where $\langle\psi|$ is the conjugate transpose of $|\psi\rangle$. 

The evolution of a state can be represented by a unitary operator $U$. 
That is, given an initial state  $|\psi\rangle$, the final state after applying $U$ is  
$|\psi'\rangle=U|\psi\rangle$. It is easy to verify that the norm of a state does not change after an evolution. 

A single-qubit memory can be represented as a point on the so-called Bloch sphere. Explicitly, a unitary operator 
$$ U=
\begin{pmatrix}
    \cos(\frac {\theta}{2}) &   -e^{-i\phi} \sin(\frac {\theta}{2}) \\
    e^{i\phi}\sin(\frac {\theta}{2}) & \cos (\frac {\theta}{2}) 
\end{pmatrix}. 
$$

\noindent corresponds to the vector $U|0\rangle$, pointing to $(\theta,\phi)$ on the sphere, where $\theta$ is the angle between the vector and the $z$-axis, and $\phi$ is the angle between the projection of the vector onto $xOy$ plane and the $x$-axis.
Note that the north pole corresponds to all $(0,\ \phi)$ and the south pole corresponds to all $(\pi,\phi)$. 

A projective measurement can be represented by a set of orthogonal projectors followed by a normalization. That is, if we apply a measurement $\{P_i\}$ to the initial state  $|\psi\rangle$, the final state becomes $|\psi_i'\rangle = P_i|\psi\rangle / \|P_i|\psi\rangle\|$ with probability $|\langle \psi|P|\psi\rangle|$. In particular, if the measurement is $\{|i\rangle\langle i|\}$ and the initial state is $\sum {\alpha_i|i\rangle}$, the final state is $|i\rangle$ with  probability $|\alpha_i|^2$. Note that the total probability of all possible  final states is $1$. 

\subsection{Solovay-Kitaev Algorithm}\label{sec:solovay-kitaev}

In order to present the Solovay-Kitaev algorithm, we first need to introduce the concept of universality.

\begin{definition}[Universal quantum gates \cite{nielsen_chuang_2010}]
A set of quantum gates is universal for quantum computation if any unitary operator can be approximated to arbitrary accuracy by a quantum circuit involving only these gates.
\end{definition}

Note that an example of such a set can be found in Chapter 4 of \cite{nielsen_chuang_2010}. 

Based on the well-defined universal set, we can now state the Solovay-Kitaev theorem which talks about how efficient a universal set is:

\begin{theorem}[Solovay-Kitaev \cite{Dawson:2006:SA:2011679.2011685}]\label{thm:Solovay-Kitaev}
There exist algorithms that can approximate any unitary operator $U$ to an accuracy of $\|U-U_{approx}\|_2\leq\epsilon$ with $O(\log^c \frac 1 {\epsilon})$ universal quantum gates. 
\end{theorem}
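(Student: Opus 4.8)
The plan is to establish the Solovay--Kitaev theorem constructively, building an approximation scheme that recursively refines a base-level net and controls the error through a contraction-type estimate. First I would fix a finite universal gate set that is closed under inversion (so that inverses of gates are themselves words over the set) and work in $SU(2)$, reducing the general unitary case to special unitaries by absorbing an irrelevant global phase. The starting point is a \emph{zeroth-order} net: since the gate set is universal, the set of all words of length at most some fixed $\ell_0$ forms an $\epsilon_0$-net of $SU(2)$ for some constant $\epsilon_0$, i.e. every target $U$ has a word $U_0$ within distance $\epsilon_0$ in the operator norm. This base case costs $O(1)$ gates and I would treat $\ell_0$ and $\epsilon_0$ as fixed constants determined only by the chosen gate set.

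The heart of the argument is the recursive step, which I would phrase as: given a procedure that approximates any target to accuracy $\epsilon_{k}$ using $N_k$ gates, construct one achieving accuracy $\epsilon_{k+1} = C\,\epsilon_k^{3/2}$ using $N_{k+1} = 5 N_k + O(1)$ gates. The key algebraic device is the \emph{group commutator}: the residual error $\Delta = U U_k^{\dagger}$, being close to the identity, can be written (to leading order) as a group commutator $\Delta \approx V W V^{\dagger} W^{\dagger}$ where $V$ and $W$ are \emph{balanced} operators lying within $O(\sqrt{\epsilon_k})$ of the identity. I would then apply the inductive approximation to $V$ and $W$ separately, obtaining $\tilde V, \tilde W$ built from $N_k$ gates each; the commutator $\tilde V \tilde W \tilde V^{\dagger} \tilde W^{\dagger}$ uses $4 N_k$ gates, and composing with the previous approximant $U_k$ gives $5 N_k$ gates total. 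The essential lemma here is a commutator estimate stating that if $V,W$ are within $\delta$ of their targets and the targets are within $O(\sqrt{\epsilon_k})$ of the identity, then the error in approximating the commutator scales like $O(\sqrt{\epsilon_k}\,\delta)$; feeding in $\delta = \epsilon_k$ and combining with the construction error of the commutator decomposition yields the superlinear improvement $\epsilon_{k+1} = O(\epsilon_k^{3/2})$.

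Unrolling the recursion is then a routine computation: the accuracy after $k$ levels satisfies $\epsilon_k \le (C^2 \epsilon_0)^{(3/2)^k}/C^2$, so to reach a target accuracy $\epsilon$ one needs roughly $k = O(\log\log \frac{1}{\epsilon})$ levels, while the gate count grows as $N_k = O(5^k)$. Eliminating $k$ between these two relations gives $N_k = O\!\left(\log^{c}\frac{1}{\epsilon}\right)$ with $c = \log_{3/2} 5 \approx 3.97$, which is the claimed polylogarithmic bound; choosing the exponent $3/2$ in the recursion (rather than a weaker exponent $1+\eta$) is precisely what keeps $c$ finite and close to $4$.

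I expect the main obstacle to be the commutator decomposition lemma, i.e.\ proving that every $\Delta$ sufficiently close to the identity admits a factorization as a group commutator of two operators that are themselves close to the identity, with the $O(\sqrt{\epsilon_k})$ control on their distance to the identity being both necessary and delicate. This step requires exploiting the local structure of $SU(2)$ near the identity --- essentially that the commutator map on the Lie algebra behaves quadratically, so a rotation by angle $\alpha$ can be realized as a commutator of two rotations by angle $\Theta(\sqrt{\alpha})$ --- and making this quantitative, rather than merely asymptotic, is the technically demanding part. The remaining pieces (the base-case net from universality, the norm estimates for products of near-identity unitaries, and the final bookkeeping of the recursion) are comparatively mechanical once the commutator lemma is in hand.
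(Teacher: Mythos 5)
Your proposal is correct and follows essentially the same route as the paper's own sketch in Appendix~\ref{app:Solovay_Algo_Proof}: a base-level $\epsilon_0$-net from universality, the group-commutator decomposition of the residual $UU_k^{\dagger}$ into balanced $V,W$ within $O(\sqrt{\epsilon_k})$ of the identity, the recursion $\epsilon_{k+1}=O(\epsilon_k^{3/2})$, $N_{k+1}=5N_k+O(1)$, and the final bookkeeping giving $c=\log 5/\log(3/2)\approx 4$. You also correctly isolate the quantitative commutator-decomposition lemma as the technically delicate step, which the paper defers to Dawson and Nielsen.
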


The proof of the theorem can be found in \cite{Dawson:2006:SA:2011679.2011685}. 

According to Harrow \cite{doi:10.1063/1.1495899}, $\Omega(\log \frac 1 {\epsilon})$ quantum gates are needed in order to approximate any unitary operator in two dimensions to an accuracy of $\epsilon$. The Solovay-Kitaev algorithm is optimal if we disregard poly-logarithmic differences in the number of quantum gates. 

The Solovay-Kitaev theorem does not exclude the possibility that we can approximate some unitary operator with a quantum gate array much shorter than $O(\log \frac 1 {\epsilon})$ to an accuracy of $\epsilon$.  

\subsection{No-Programming Theorem}\label{sec:no-programming}

The no-programming theorem \cite{PhysRevLett.79.321} shows that we cannot use fewer qubits than classical bits for programming if we want to implement a quantum gate array deterministically. We view our quantum computer as a unitary operator $G$ acting on both the quantum program $|P\rangle$ and the memory $|d\rangle$. $G$ acting on a quantum program $|P\rangle$ for unitary $U$ results in $U|d\rangle \otimes |P'\rangle$. After measurement we get $U|d\rangle$ \textit{deterministically}. However, $G$ acting on a superposition of orthogonal quantum programs $\frac {1} {\sqrt 2} (|P_1\rangle+|P_2\rangle)$ results in a superposition of orthogonal states $\frac {1} {\sqrt 2} (U_1|d\rangle \otimes |P'_1\rangle+U_2|d\rangle\otimes |P'_2\rangle)$. Therefore, after our measurement, we obtain either $U_1|d\rangle$ or $U_2|d\rangle$ \textit{stochastically}.  

\begin{theorem}\label{thm:no_programming}
On a fixed, general purposed quantum computer, if we want to deterministically implement a quantum gate array, quantum programs $|P_1\rangle$,...,$|P_n\rangle$ performing distinct unitary operator $U_1$,...,$U_n$ are orthogonal. The program memory is at least $N$-dimensional, that is, it contains at least $\log(N)$ qubits. 
\end{theorem}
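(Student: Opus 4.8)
The plan is to exploit the unitarity of the global operator $G$ by computing a single inner product in two different ways. First I would make precise what ``deterministic implementation'' demands: for each program $|P_j\rangle$, applying $G$ to the joint state must yield $G\left(|d\rangle\otimes|P_j\rangle\right)=\left(U_j|d\rangle\right)\otimes|P_j'\rangle$, where the residual program state $|P_j'\rangle$ is \emph{independent} of the data $|d\rangle$. This independence is the crucial physical requirement: if $|P_j'\rangle$ depended on $|d\rangle$, the output data register would stay entangled with the program register, and one could not recover the pure state $U_j|d\rangle$ deterministically by a measurement confined to the data register, contradicting the setup described before the statement.

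Next I would take two programs $|P_1\rangle,|P_2\rangle$ implementing distinct operators $U_1,U_2$ and evaluate $\langle d|\otimes\langle P_1|\,G^\dagger G\,|d\rangle\otimes|P_2\rangle$. Since $G^\dagger G=I$, this equals $\langle P_1|P_2\rangle$ (taking $\langle d|d\rangle=1$). Computing the same quantity from the right-hand side of the deterministic action gives $\langle d|U_1^\dagger U_2|d\rangle\,\langle P_1'|P_2'\rangle$. Equating the two yields the identity $\langle P_1|P_2\rangle=\langle d|U_1^\dagger U_2|d\rangle\,\langle P_1'|P_2'\rangle$, which must hold for \emph{every} unit data state $|d\rangle$, while the left-hand side and the factor $\langle P_1'|P_2'\rangle$ are both manifestly independent of $|d\rangle$.

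The heart of the argument is then a dichotomy. Either $\langle P_1'|P_2'\rangle=0$, in which case the identity forces $\langle P_1|P_2\rangle=0$ immediately; or $\langle P_1'|P_2'\rangle\neq0$, in which case $\langle d|U_1^\dagger U_2|d\rangle$ must be a constant independent of $|d\rangle$. I would dispose of the second case with a short linear-algebra lemma: if $\langle d|M|d\rangle$ takes the same value for all unit vectors $|d\rangle$, then $M=cI$ for a scalar $c$; this is checked by substituting basis vectors (equalizing the diagonal), then their symmetric combinations $(|e_i\rangle+|e_j\rangle)/\sqrt 2$ and their $i$-weighted combinations $(|e_i\rangle+i|e_j\rangle)/\sqrt 2$, which together annihilate every off-diagonal entry. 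Applying this to $M=U_1^\dagger U_2$ gives $U_2=cU_1$ with $|c|=1$, so $U_1$ and $U_2$ coincide up to a global phase and are not distinct as physical operations, a contradiction. Hence in all admissible cases $\langle P_1|P_2\rangle=0$, and the programs are pairwise orthogonal.

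Finally, $n$ pairwise orthogonal states can live only in a Hilbert space of dimension at least $N=n$, so the program register must be at least $N$-dimensional and therefore contain at least $\log N$ qubits. I expect the main obstacle to be not the computation but the careful justification of the $|d\rangle$-independence of $|P_j'\rangle$, namely arguing rigorously that any genuinely deterministic gate array must factor the program register off the output, since this is exactly where the physical meaning of ``deterministic'' enters and where a loose argument could smuggle in an unjustified assumption.
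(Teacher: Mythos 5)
Your proof is correct and is essentially the standard Nielsen--Chuang argument that the paper itself relies on (the paper gives only the informal setup $G(|d\rangle\otimes|P\rangle)=U|d\rangle\otimes|P'\rangle$ and defers to the cited reference for the proof): unitarity of $G$ yields $\langle P_1|P_2\rangle=\langle d|U_1^\dagger U_2|d\rangle\,\langle P_1'|P_2'\rangle$, and your dichotomy together with the polarization lemma forces orthogonality. The point you flag---the $|d\rangle$-independence of $|P_j'\rangle$---is indeed the step requiring care, and your entanglement argument is the right one; it becomes fully rigorous by applying the linearity of $G$ to a superposition $\alpha|d_1\rangle+\beta|d_2\rangle$ and observing that the output can only be the required product state $U_j(\alpha|d_1\rangle+\beta|d_2\rangle)\otimes|P_j'\rangle$ if the residual program states for $|d_1\rangle$ and $|d_2\rangle$ coincide.
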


The theorem shows that, when used for programming a deterministic quantum gate array, a quantum program has no advantage over a classical program, i.e. in this aspect a quantum controlled quantum computer is no better than a classically controlled quantum computer. 

When used for programming a probabilistic quantum gate array, there are quantum programs that use exponentially less space but succeed with exponentially smaller probability, which is not practical. In our paper, we thus only consider classical bits for programming.

\section{Data Stream Problems}
In this section, we present selected examples of data stream problems and study their space complexity. Each section is organized as follows: we first introduce the problem statement and the corresponding proposed algorithm for quantum computers with a continuous set of gates. In practice, quantum computers with a continuous set of gates cannot be realized, which makes such algorithms only of theoretic interest. In the following section, we assume that our universal quantum computer first selects a certain universal set of gates, then it is asked data stream problems with any possible scale and parameter. The quantum computer should answer any possible question using the same universal set of gates. We therefore analyze the space complexity of the respective algorithm on such a universal quantum computer and show that it has no advantage over the space complexity of the best known classical algorithm. 

\subsection{PartialMOD Problem}
In this section, we study the PartialMOD problem as presented in \cite{DBLP:journals/corr/abs-1802-05134,AMBAINIS2012289,ABLAYEV2005145}. 
In this problem, we receive some unknown bitstring bit by bit of which we know that the number of bits with value $1$ is a multiple of a given number. The task is to determine the parity of the multiplier of this number while storing as few bits as possible in the memory.

\begin{definition}[PartialMOD problem]
Let $(x_1,...,x_n)$ be an input sequence of classical bits. Assume that we know in advance that $\#_1$ is a multiple of $p$, i.e., $\#_1=v\cdot p$, where $\#_1$ denotes the number of ones in the string. The bits are received one by one by the algorithm. The problem is to determine the parity of $v$, i.e., to output $v\ mod\ 2$. 
\end{definition}

\subsubsection{Algorithm with a Continuous Set of Gates} 

Ambainis and Yakaryilmaz \cite{AMBAINIS2012289} showed that there exists no deterministic or probabilistic algorithm to compute PartialMOD problem with $o(\log p)$ classical bits. In their paper, they also propose a quantum algorithm solving PartialMOD using only one qubit.
This algorithm works as follows: There is only one qubit in the quantum memory. Let the initial state of the qubit be $|0\rangle$, which is the north pole of the Bloch sphere, and set $\theta_p =\frac {\pi}{2p}$. Each time we receive a $1$ as the next bit, we apply a unitary operator
$$R(\theta_p)= 
\begin{pmatrix}
        \cos \theta_p   &   \sin \theta_p\\
        -\sin \theta_p  &   \cos \theta_p
\end{pmatrix}.
$$
on the qubit, which is a rotation by $2\theta_p$ around $y$-axis on the Bloch sphere. After $v\cdot p$ steps, we receive all the input bit and get the state 
\begin{equation*}
    |\psi_f\rangle = 
    \left( \cos (v\frac {\pi} {2}) \ \ -\sin (v\frac {\pi} {2})\right)^\text{T}.
\end{equation*}

If $v\ mod\ 2=0$, we return to the north pole of the Bloch sphere and the final state of the qubit is $|0\rangle$. If $v\ mod\ 2=1$, we reach the south pole of the Bloch sphere and the final state is $|1\rangle$. Finally, we can measure the qubit and obtain its state. 

With this procedure, we only need one qubit to solve the PartialMOD problem on quantum computers. In contrast, a classical computer requires to use $\log p$ bits, as is shown in \cite{ABLAYEV2005145}.

\subsubsection{Analysis on Universal Quantum Computers}

In the following, we show that the proposed quantum algorithm is not space efficient on universal quantum computers. We suppose that our universal quantum computer is able to solve any specific PartialMOD problem, which requires that we should be able to apply any $R(\theta_p)$ to the demanded accuracy. Observe that there are infinitely many choices of $p$, and thus infinitely many different $R(\theta_p)$. Since only finitely many gates can be selected in the universal set of a quantum computer, $R(\theta_p)$ have to be approximated by a quantum gate array, where each gate of the array is from the universal set. This leads to possibly wrong outputs. Assume therefore that we approximate $R(\theta_p)$ by $R(\theta_p+\epsilon_p)$, which satisfies
\begin{equation*}
    \|R(\theta_p)-R(\theta_p+\epsilon_p)\|_2=4\sin {\frac {\epsilon_p}{4}}.
\end{equation*}
Starting with the initial state $|0\rangle$, we reach the state
\begin{equation*}
    |\psi_f\rangle = 
    \begin{pmatrix}   
        \cos (v\frac {\pi} {2}+v p \cdot\epsilon_p)\\
        -\sin (v\frac {\pi} {2}+v p \cdot\epsilon_p)
    \end{pmatrix}.
\end{equation*} 
after $vp$ steps. With probability $\sin^2 (v p \cdot\epsilon_p)$ we may get an incorrect output from the measurement. If $v p \cdot\epsilon_p$ is small enough, we can bound the probability of an incorrect output by a positive constant $\delta$ as follows 
\begin{equation*}
    \frac{1}{2}v p\cdot 4\sin \left(\frac {\epsilon_p}{4}\right)\leq \sin\left(v p \cdot\epsilon_p\right)\leq \sqrt {\delta}.
\end{equation*} 
Therefore, an accuracy of $\frac {2\sqrt{\delta}}{vp}$ must be achieved. Such an accuracy comes at the cost of additional quantum gates. 

Intuitively, applying the Solovay-Kitaev algorithm, we need a quantum gate array of $\log\frac{vp}{2\sqrt{\delta}}$. We will show next that a quantum gate array of at least $\Omega(\log ( \frac{v} {\sqrt{\delta}} \log p))$ gates must be used in order to approximate $R(\theta_p)$ in the proposed algorithm to an accuracy of $\frac {2\sqrt{\delta}}{vp}$. Note that in this theorem we do not assume the optimality of the Solovay-Kitaev algorithm. Because the optimality of Solovay-Kitaev algorithm is in the sense of polylogarithmic equivalence, and the truly optimal algorithm has not been given, simply assuming this algorithm to be optimal may cause difficulties. However, even without such an assumption, Theorem \ref{thm:approximation} still shows that the quantum algorithm performs worse in some situations. 

\begin{theorem}\label{thm:approximation}
No algorithm can approximate all $R(\theta_p)$, where $\theta_p=\frac{\epsilon}{2p}$ and $p\leq p_0$, to an accuracy of $\epsilon_p=\frac {\epsilon}{2p}$ using $o(\log(\frac {1} {\epsilon}\log{p_0}))$ quantum gates on a universal computer, where $p_0$ is sufficiently large and $\epsilon$ sufficiently small. We do not assume the optimality of the Solovay-Kitaev algorithm here. 
\end{theorem}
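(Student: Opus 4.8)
The plan is to establish the lower bound by a direct counting argument that does \emph{not} invoke Theorem~\ref{thm:Solovay-Kitaev}: a universal computer has a finite gate set, so short gate arrays can realize only few distinct unitaries, whereas the family $\{R(\theta_p)\}_{p\le p_0}$ (with the algorithm's rotation angle $\theta_p=\frac{\pi}{2p}$) contains many targets that are pairwise too far apart to share a common approximant. Concretely, I would fix the universal set, of some finite cardinality $g$, and note that the number of gate arrays of length at most $\ell$, and hence the number of distinct unitaries they implement, is at most $\sum_{k=0}^{\ell}g^k\le 2g^{\ell}$. Thus if an algorithm spends at most $\ell$ gates to produce an approximant $U_p$ of each $R(\theta_p)$, all the $U_p$ are drawn from a pool of at most $2g^{\ell}$ operators.

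Next I would record the separation criterion. If one unitary $U$ approximated both $R(\theta_p)$ and $R(\theta_q)$ within their tolerances, the triangle inequality would give $\|R(\theta_p)-R(\theta_q)\|_2\le \epsilon_p+\epsilon_q$; hence any two targets with $\|R(\theta_p)-R(\theta_q)\|_2>\epsilon_p+\epsilon_q$ must use distinct approximants. Using $\|R(\theta_p)-R(\theta_q)\|_2=2\sin\frac{|\theta_p-\theta_q|}{2}$ and the small-angle estimate (legitimate since all angles are small once $p_0$ is large), together with $\theta_p=\frac{\pi}{2p}$ and $\epsilon_p=\frac{\epsilon}{2p}$, the spacing $\theta_p-\theta_q=\frac{\pi}{2}\frac{q-p}{pq}$ must exceed $\epsilon_p+\epsilon_q=\frac{\epsilon}{2}\frac{p+q}{pq}$. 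This reduces to a condition on the ratio alone, namely $\frac{q}{p}>\frac{\pi+\epsilon}{\pi-\epsilon}\approx 1+\frac{2\epsilon}{\pi}$.

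I would then exhibit a large pairwise-separated family by a geometric progression: pick integers $1\le p_1<\cdots<p_M\le p_0$ with $p_{i+1}\ge(1+\frac{2\epsilon}{\pi})\,p_i$. Since the separation condition depends monotonically on the ratio $q/p$, \emph{every} pair in this family is separated, so the operators $U_{p_1},\dots,U_{p_M}$ are mutually distinct. The largest admissible $M$ is $\log_{1+2\epsilon/\pi} p_0+O(1)=\Theta(\frac{1}{\epsilon}\log p_0)$. Combining this with the counting bound yields $2g^{\ell}\ge M=\Theta(\frac{1}{\epsilon}\log p_0)$, whence $\ell=\Omega(\log(\frac{1}{\epsilon}\log p_0))$, so $o(\log(\frac{1}{\epsilon}\log p_0))$ gates cannot suffice.

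The main obstacle is the bookkeeping around the small-angle approximations and the integer rounding in the progression: I would need the separation inequality to hold \emph{exactly} for every chosen pair rather than only to leading order, and to confirm that the sub-leading contributions (for instance the $\log\frac1\epsilon$ versus $\log\log p_0$ terms, and the error in $\ln(1+\frac{2\epsilon}{\pi})\approx\frac{2\epsilon}{\pi}$) are controlled using the hypotheses that $p_0$ is sufficiently large and $\epsilon$ sufficiently small. Once the separated family is fixed, the triangle-inequality and counting steps are routine.
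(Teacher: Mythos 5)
Your proposal is correct and follows essentially the same route as the paper's proof: both rest on the fact that a length-$\ell$ array over a finite gate set realizes at most $2^{O(\ell)}$ distinct unitaries, combined with the observation that the targets $R(\theta_p)$ with $\theta_p=\frac{\pi}{2p}$ and tolerance $\epsilon_p=\frac{\epsilon}{2p}$ force $\Theta(\frac{1}{\epsilon}\log p_0)$ distinct approximants, obtained from a geometric progression in $p$ with ratio $\approx\frac{\pi+\epsilon}{\pi-\epsilon}$. The only difference is that you phrase the second step as a packing bound (pairwise-separated targets need distinct approximants), whereas the paper uses the dual covering bound (adjacent tolerance intervals must abut); the two formulations yield the identical count, and yours is if anything the cleaner way to make the lower bound rigorous.
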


We will not present the proof here, but the general idea of the proof is inspired by \cite{doi:10.1063/1.1495899}.

Theorem \ref{thm:approximation} implies that at least $O(\log (\frac {v}{\sqrt{\delta}}\log p_0))$ quantum gates are needed in order to approximate all $R(\theta_p)$, where $p\leq p_0$, to the demanded accuracy of $\epsilon_p = \frac {2\sqrt{\delta}}{vp}$ in order to ensure a success probability of at least $1-\delta$. Since we have to store the arrangement of the quantum gate array for each $R(\theta_p)$, the number of classical bits required is equal to the number of gates in the quantum gate array, that is, at least $O(\log (\frac {v}{\sqrt{\delta}}\log p))$ classical bits. It is obvious that when $v$ approaches infinity while $p$ remains finite, the quantum algorithm for PartialMOD is not more space-efficient than the corresponding classical algorithm.

Assuming the optimality of the Solovay-Kitaev algorithm, which is discussed in \cite{doi:10.1063/1.1495899}, we can also show that in order to obtain such an accuracy, at least $O(\log (\frac {vp} {\sqrt{\delta} }))$ quantum gates must be used by any algorithm.  

\begin{theorem}\label{thm:another_approximation}
Let $p$ be sufficiently large. No algorithm can approximate all $R(\theta_p)$, where $\theta_p=\frac {\pi}{2p} $ to any accuracy $\epsilon_p$ with $o(\log(\frac 1 {\epsilon_p}))$ quantum gates on a universal computer, if the optimality of the Solovay-Kitaev algorithm is assumed. 
\end{theorem}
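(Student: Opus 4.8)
The plan is to argue by contradiction, reducing the statement to the necessity half of the Solovay--Kitaev theorem, which the hypothesis grants as tight: under the optimality assumption, $\Theta(\log\frac1\epsilon)$ gates are both sufficient (Theorem~\ref{thm:Solovay-Kitaev}) and necessary to reach accuracy $\epsilon$, the necessity being Harrow's $\Omega(\log\frac1\epsilon)$ bound. So I would suppose that some universal-gate algorithm approximates $R(\theta_p)$ to accuracy $\epsilon_p$ using only $o(\log\frac1{\epsilon_p})$ gates as $\epsilon_p\to 0$, and aim to show that this beats the necessary gate count and hence contradicts optimality.

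First I would record the relevant geometry. Since $R(\theta_p)$ is a rotation about the $y$-axis and $\|R(\alpha)-R(\beta)\|_2 = 2\lvert\sin\frac{\alpha-\beta}{2}\rvert$, approximating $R(\theta_p)$ to accuracy $\epsilon_p$ amounts to producing a gate word whose rotation angle lands within a window of width $\Theta(\epsilon_p)$ of that of $R(\theta_p)$. For $p$ sufficiently large I would check that the target is genuinely nontrivial at this scale, i.e.\ $\|R(\theta_p)-I\|_2 = 2\sin\frac{\theta_p}{2}$ exceeds $\epsilon_p$, so that the empty word---and, more generally, the finitely many short words fixed by the gate set---cannot already serve as the approximation; this is the role played by the hypothesis that $p$ is large. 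With the target pinned down as a legitimate nontrivial rotation, I would then invoke the assumed-optimal lower bound directly on $R(\theta_p)$: any array of universal gates lying within $\epsilon_p$ of it must consist of $\Omega(\log\frac1{\epsilon_p})$ gates, contradicting the assumed $o(\log\frac1{\epsilon_p})$. Substituting the accuracy demanded by the PartialMOD analysis then recovers the claimed $\Omega(\log\frac{vp}{\sqrt{\delta}})$.

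The hard part is the step I glossed over: Harrow's $\Omega(\log\frac1\epsilon)$ lower bound is a \emph{counting/volume} statement about the worst-case unitary, and---as the remark following Theorem~\ref{thm:Solovay-Kitaev} explicitly warns---it need not hold for a \emph{particular} target, since the identity, any gate in the set, or any point in their $\epsilon$-neighbourhoods is cheaply reachable. The crux is therefore to certify that the specific small rotation $R(\theta_p)$ is generic enough to inherit the worst-case bound. I would do this via equidistribution of length-$L$ words: the set reachable with $\le L$ gates has size $\le g^{L}$ for a gate set of size $g$ and, for a good universal set, is $\Theta(g^{-L/3})$-dense \emph{uniformly} over $SU(2)$, so its covering radius at the point $R(\theta_p)$ is also $\Theta(g^{-L/3})$; forcing this to be at most $\epsilon_p$ yields $L=\Omega(\log\frac1{\epsilon_p})$.

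Finally I would explain why the optimality assumption cannot simply be replaced by a packing argument over the family $\{R(\theta_p)\}_p$, as was done for Theorem~\ref{thm:approximation}. When the demanded accuracy scales like $\epsilon_p\sim\theta_p\sim 1/p$, as it does in the PartialMOD application, consecutive targets $R(\theta_p)$ and $R(\theta_{p+1})$ are only $\Theta(1/p^2)$ apart, so $\Theta(p)$ of them fall inside a single $\epsilon_p$-ball and one approximant serves them all; keeping targets mutually $\epsilon_p$-separated then forces a geometric spacing of the $p$-values and leaves only $O(\log p)$ of them, whence pigeonhole gives merely $\Omega(\log\log\frac1{\epsilon_p})$. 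It is precisely to upgrade this weak packing estimate to the sharp $\Omega(\log\frac1{\epsilon_p})$ for each individual target that the single-target optimality of Solovay--Kitaev is invoked, which is why I expect the genericity certification of $R(\theta_p)$ to be the main obstacle of the proof.
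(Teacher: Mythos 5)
Your diagnosis of where the difficulty lies is accurate --- the worst-case $\Omega(\log\frac{1}{\epsilon})$ bound says nothing about a particular target, and the remark after Theorem~\ref{thm:Solovay-Kitaev} warns exactly of this --- but the tool you propose to close that gap does not do the job, so this is a genuine gap rather than a presentational one. Uniform $\Theta(g^{-L/3})$-density of the set of length-$\le L$ words over $SU(2)$ is a statement that \emph{every} point, including $R(\theta_p)$, is within that distance of some short word; it is an upper bound on the covering radius, and it gives no lower bound on the distance from the specific point $R(\theta_p)$ to the net. To conclude that $\Omega(\log\frac 1 {\epsilon_p})$ gates are necessary for the particular unitary $R(\theta_p)$ you would need an anti-concentration (non-commutative Diophantine) statement --- that no word of length $o(\log\frac 1 {\epsilon_p})$ lands within $\epsilon_p$ of $R(\theta_p)$ --- and that can simply be false: nothing prevents some $\theta_p$ from coinciding with, or lying extremely close to, the rotation angle of a constant-length word, in which case the per-target lower bound fails. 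The theorem does not in fact claim a per-target bound; the quantifier is over the whole family $\{R(\theta_p)\}$.

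The paper's proof exploits that quantifier in the opposite direction. Assuming for contradiction that \emph{all} $R(\theta_p)$ are approximable with $o(\log\frac 1 {\epsilon_p})$ gates, it uses the quadratic density of the angles near zero ($\theta_p-\theta_{p+1}\approx \frac{\pi}{2p^2}$, so any small residual rotation of angle $\delta$ is within $O(\delta^2)$ of some $\theta_p$) to build a simplified Solovay--Kitaev-style recursion: each round multiplies by a cheap $R(\theta_{p_k})$ and squares the accuracy, so after $n$ rounds one reaches accuracy $\delta_n$ with only $o(\log\frac{1}{\delta_n})$ gates in total. That yields an approximation scheme for arbitrary single-qubit unitaries beating the assumed-optimal $\Theta(\log\frac{1}{\epsilon})$, which is the contradiction. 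The lower bound is therefore never localized to an individual $R(\theta_p)$ --- the step your proposal identifies as the main obstacle is precisely the step the paper's argument is designed to avoid. Your closing observation about why a naive packing argument over the family only yields $\Omega(\log\log\frac 1 {\epsilon_p})$ is correct and is consistent with why the paper switches to this bootstrapping argument here rather than reusing the covering argument of Theorem~\ref{thm:approximation}.
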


This theorem can be proved by contradiction: if one can approximate these operators with $o(\log(\frac 1 {\epsilon_p}))$ quantum gates, then it is possible to construct a better algorithm than the Solovay-Kitaev algorithm.  

Theorem \ref{thm:another_approximation} shows that there exists some $R(\theta_p)$ for which we need at least $\Omega(\log \frac 1 {\epsilon_p})=\Omega(\log v+\log p)$ gates in order to approximate it to the demanded accuracy of $\epsilon_p={\frac {2\sqrt{\delta}} {vp}}$, assuming the optimality of the Solovay-Kitaev theorem. Since we have to store the arrangement of the quantum gate array, the number of classical bits needed is $\Omega(\log v+\log p)$. When $v$ or $p$ approach infinity, the quantum algorithm is not more space-efficient than the classical algorithm. 

Theorem \ref{thm:approximation} and Theorem \ref{thm:another_approximation} are proved under different assumptions. Together they show that the previously proposed algorithm is not more space-efficient than its classical counterpart under certain conditions. 

\subsection{Equality Problem}
In this section, we investigate the so-called Equality problem \cite{ablayev2008complexity,Newman:1996:PVP:237814.238004,PhysRevLett.87.167902}.
In this problem, two bitstrings are received once one after another bit by bit. The task is to find out whether these two given sequences of bits are equal while storing a minimal amount of information. 

\begin{definition}[Equality problem]\label{def:equality_problem}
We are given an input sequence $(x,y)=(x_1,...,x_n,y_1,...,y_n)$ of classical bits. The bits are received one by one by the algorithm. We do not receive any bit of $y$ before we have received all bits of $x$. The output is whether $x$ and $y$ are equal, i.e., $O=\delta(\|x-y\|)=1,x=y;0,x\neq y$. 
\end{definition}

\subsubsection{Algorithm with a Continuous Set of Gates}\label{sec:EQonNONuni}
According to \cite{SAUERHOFF2005177,Newman:1996:PVP:237814.238004}, there is no classical deterministic algorithm that can compute the equality problem with $o(n)$ classical bits, while there is a randomized algorithm, i.e. Karp-Rabin algorithm, with a space complexity of $O(\log n)$ \cite{karp1987efficient}. There also exists a quantum algorithm that has the same performance. Ablayev et al.~\cite{ablayev2008complexity} applied quantum fingerprinting in a quantum streaming algorithm to solve this problem with $O(\log n)$ qubits on a quantum computer with a continuous set of gates. Their algorithm seems to have the same performance as the Karp-Rabin algorithm.  

The quantum memory is divided into two parts. The first part is the first qubit, whose state is in a 2-dimensional space. The second part contains the remaining $\log t$ qubits in a $t$-dimensional space. The initial state is $|0\rangle \otimes |0\rangle $. The strategy is to first apply Hadamard gates on all qubits of the second part and receive $\frac 1 {\sqrt {t}}|0\rangle \otimes \sum_{j=1}^{t}|j\rangle $. If we receive a $1$ for $x_i$, we apply a unitary operator $U_i=\sum_{j=1}^{t} \{R(\theta_{ij})\otimes |j\rangle \langle j|\}$, where $R(\theta_{ij})$ is a rotation on the first qubit by $\theta_{ij}=\frac {2\pi m_j}{2^{i+1}}$ and $m_j$ some positive integer.
If we receive a $1$ for $y_i$, we replace $R(\theta_{ij})$ with $R(-\theta_{ij})$ in $U_i$. After receiving all the input bits, the state is 
\begin{equation*}
    \frac 1 {\sqrt{t}}\sum_{j}{  R\left(\frac {2\pi m_j(x-y)}{2^{n+1}}\right)|0\rangle\otimes|j\rangle}.
\end{equation*}
Then we apply Hadamard gates on all qubits in the second part. The final state becomes
\begin{equation*}
    \frac 1 t \sum_{j}{\left(\cos \frac {2\pi m_j(x-y)}{2^{n+1}} \right)}|0\rangle \otimes |0\rangle+ rest.
\end{equation*}
If $x=y$, we return to the initial state. If $x\neq y$, we reach a non-initial state. We require that the coefficient of $|0\rangle\otimes |0\rangle$ in the final state is approximately a delta function, that is,
\begin{equation*}
    \left\|\frac 1 t \sum_{j}{\cos \frac {2\pi m_j(x-y)}{2^{n+1}}}-\delta(x-y) \right\|\leq \sqrt{\epsilon}.
\end{equation*}
Then we can easily verify whether $x=y$ by checking whether we get $|0\rangle\otimes |0\rangle $ after measurement. If $x=y$, we obtain $|0\rangle$ with probability $1$. If $x\neq y$, we obtain $|0\rangle \otimes |0\rangle$ with probability less than $\epsilon$.

If we apply discrete Fourier transform to $\delta(g)$, that is, $m_j$s take $t=2^n$ integers from $0$ to $2^n$, $\epsilon$ is exactly $0$. But in that case we need $\log(t)=n$ qubits. It is however possible that if we do not apply discrete Fourier transform, that is, $m_j$s only take $t=O(n \log \frac 1 {\epsilon})\ll 2^n$ integers from $0$ to $2^n$, $\epsilon$ is also bounded. The next theorem states this fact, its proof can be found in \cite{ablayev2008complexity}. 

\begin{theorem}\label{thm:funny_theorem}
There exists a set of $t>\frac 2 {\epsilon} \ln(2m)$ elements, $\{m_j,\ j=1,...,t\}$ such that 
\begin{equation*}
    \frac 1 {t} \left\|\sum_{j} \cos \left(\frac {2\pi m_j g} {m}\right)\right\| \leq \sqrt{\epsilon},\ \forall g\neq 0.
\end{equation*}
\end{theorem}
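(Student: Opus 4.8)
The plan is to prove existence by the probabilistic method: I would exhibit a random construction of the $m_j$ and show that it violates the required bound with probability strictly less than one, so that at least one good choice must exist. Concretely, draw $m_1,\dots,m_t$ independently and uniformly at random from $\{0,1,\dots,m-1\}$, and fix an arbitrary $g$ with $g\not\equiv 0\pmod m$ (which holds for every nonzero $g$ in the relevant range $|g|<m$). Since the inner sum $\sum_j\cos(2\pi m_j g/m)$ is real, the norm $\|\cdot\|$ in the statement is just its absolute value. Setting $X_j=\cos\!\left(\frac{2\pi m_j g}{m}\right)$, these are i.i.d.\ random variables supported on $[-1,1]$, and the first step is to note that they are centered: using $\sum_{k=0}^{m-1}e^{2\pi i k g/m}=0$ for $g\not\equiv 0\pmod m$, one gets
\begin{equation*}
E[X_j]=\frac1m\,\mathrm{Re}\sum_{k=0}^{m-1}e^{2\pi i k g/m}=0 .
\end{equation*}
Thus $\frac1t\sum_j X_j$ is the empirical mean of bounded, zero-mean i.i.d.\ variables, which I expect to concentrate tightly around $0$.

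The second step is a concentration-plus-union-bound argument. Hoeffding's inequality for variables valued in $[-1,1]$ yields, for each fixed $g$,
\begin{equation*}
\Pr\!\left[\left|\frac1t\sum_{j=1}^{t}X_j\right|\ge \sqrt\epsilon\right]\le 2\exp\!\left(-\frac{t\epsilon}{2}\right).
\end{equation*}
Because $\cos(2\pi m_j g/m)$ depends only on $g\bmod m$, it suffices to impose this for the at most $m-1$ nonzero residues $g\in\{1,\dots,m-1\}$, so a union bound gives
\begin{equation*}
\Pr\!\left[\exists\, g\ne 0:\ \frac1t\left|\sum_{j}\cos\!\left(\frac{2\pi m_j g}{m}\right)\right|\ge\sqrt\epsilon\right]\le 2(m-1)\exp\!\left(-\frac{t\epsilon}{2}\right).
\end{equation*}
Substituting the hypothesis $t>\frac2\epsilon\ln(2m)$ gives $\exp(-t\epsilon/2)<\frac1{2m}$, so the right-hand side is at most $\frac{m-1}{m}<1$. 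Hence the bad event has probability strictly below one, and some realization $m_1,\dots,m_t$ satisfies $\frac1t\|\sum_j\cos(2\pi m_j g/m)\|\le\sqrt\epsilon$ simultaneously for all $g\ne 0$, which is the assertion.

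The step I expect to be most delicate is controlling the constants so that the union bound stays below $1$. If one ranged $g$ over all nonzero residues of $\{-(m-1),\dots,m-1\}$ one would pay a factor close to $2m$ and the bound would only give $2$, not something below $1$; the fix is to exploit that the summand depends solely on $g\bmod m$ (and is even in $g$), which cuts the number of constraints below $m$ and is exactly what makes the threshold $\frac2\epsilon\ln(2m)$ work out. A secondary technicality is the word \emph{set}: sampling with replacement produces a multiset, so to guarantee distinct $m_j$ I would instead draw a uniformly random $t$-element subset of $\{0,\dots,m-1\}$, which is harmless because $t\ll m$ in the intended regime and because Hoeffding's inequality (and the zero-mean computation) remain valid for sampling without replacement.
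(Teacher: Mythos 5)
Your proposal is correct and follows essentially the same route as the paper's argument: sample the $m_j$ uniformly at random, observe the summands are centered because the $m$-th roots of unity sum to zero, apply a Hoeffding-type concentration bound (the paper phrases it as Azuma--Hoeffding for the partial-sum martingale, which for i.i.d.\ bounded terms is the same estimate), and finish with a union bound over the nonzero residues $g$ using $t>\frac{2}{\epsilon}\ln(2m)$ to push the failure probability below $1$. Your extra remarks about restricting to residues mod $m$ and about distinctness of the $m_j$ are sensible refinements but do not change the substance.
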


Theorem \ref{thm:funny_theorem} implies that there exists a set of $t=\frac 2 {\epsilon} \ln (2m) +1$ elements, $\{m_j,j=1,...,t\}$, which ensures $\cos\left(\frac {2\pi k_i g}{m}\right)$'s to almost cancel each other. Indeed, if we select integers uniformly at random from $0$ to $m-1$, we are likely to get such $m_j$.
By applying Theorem \ref{thm:funny_theorem} to the Equality problem, we only need $\log(n)+1$ qubits on quantum computers with a continuous set of gates, which is exponentially better than $n$ bits deterministic algorithms on computers, as was shown by Babai et al.~\cite{Babai:1997:RSM:791230.792293}.

\subsubsection{Analysis on Universal Quantum Computers}
The proposed algorithm to solve the Equality problem is not space-efficient on universal quantum computers. Similar to the PartialMOD problem, we will first bound the accuracy of each operator. Let us denote the probability for the algorithm to accept the input, i.e., in the case where the final state is $|0\rangle \otimes |0\rangle$, as $\Pr(x,y)$. Further, assume that it is possible to approximate the operator $R(\theta_{ij})$ to an accuracy of $\delta_{ij}$. After applying Theorem \ref{thm:funny_theorem}, the partial derivative of $\Pr(x,y)$ becomes 
\begin{equation*}
    \delta \Pr(x,y)\leq\frac{\sqrt{\epsilon}}{t}\left\|\sum_{j=1}^t\sum_{i}\sin\left(\frac{m_j\pi 2^i(x-y)_i}{2^n}\right)\delta_{ij}\right\|.
\end{equation*}
Differently than in the PartialMOD problem, it is challenging to bound the accuracy for the Equality problem precisely. Instead, we simply assume we need $\Omega(1)$ gates for each $R(\theta_{ij})$. The following theorem defines an upper bound on the accuracy needed, and shows our simple assumption is reasonable. 

\begin{theorem}\label{thm:EQ_Upper_Bound}
Let $|\delta_{ij}|\leq \frac 1 n$. Then, there exists a set of $t=\frac 2 {\epsilon}(n+3)$ elements $m_j,j=1,...,t$, such that the following two inequalities are satisfied
\begin{equation*}
    \frac {1} {t} \left\|\sum_{j} \cos \left(\frac {\pi m_j g} {2^n}\right)\right\| \leq \sqrt{\epsilon},\ \forall g\neq 0,
\end{equation*}
\noindent and
\begin{equation*}
    \frac {1} {t}\left\|\sum_{j=1}^t\sum_{i}\sin\left(\frac{m_j\pi 2^ig_i}{2^n}\right)\delta_{ij}\right\|\leq \sqrt{\epsilon},\ \forall g\neq 0.
\end{equation*}
\end{theorem}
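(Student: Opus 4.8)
The plan is to prove both inequalities simultaneously by the probabilistic method, mirroring the argument sketched for Theorem~\ref{thm:funny_theorem}. I would draw $m_1,\dots,m_t$ independently and uniformly at random from $\{0,1,\dots,2^{n+1}-1\}$ and show that, for this single common random choice, the probability that \emph{either} inequality fails for \emph{some} nonzero $g=x-y$ is strictly less than $1$; any outcome avoiding all the bad events then exhibits the desired set $\{m_j\}$. The first inequality is literally Theorem~\ref{thm:funny_theorem} with modulus $m=2^{n+1}$ (since $\frac{\pi m_j g}{2^n}=\frac{2\pi m_j g}{2^{n+1}}$), so the only genuinely new ingredient is a tail bound for the second sum, together with a union bound that handles both inequalities and every nonzero $g$ at once.

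For the second inequality, fix $g\neq 0$ and define the partial sums $Y_\ell(g)=\sum_{j=1}^{\ell} Z_j$, where $Z_j=\sum_i \sin\!\left(\frac{\pi m_j 2^i g_i}{2^n}\right)\delta_{ij}$ and $g_i=(x-y)_i$. Two facts make $Y_\ell(g)$ amenable to Azuma--Hoeffding, exactly as for the cosine martingale of Theorem~\ref{thm:funny_theorem}. First, the increments are bounded: since there are $n$ bit positions and $|\delta_{ij}|\le \tfrac1n$, we have $|Z_j|\le \sum_i |\delta_{ij}|\le n\cdot \tfrac1n=1$, hence $|Y_{\ell+1}(g)-Y_\ell(g)|\le 1$. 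Second, each $Z_j$ has mean zero: because $m_j$ is uniform over a full period, the root-of-unity sum $\sum_{m=0}^{2^{n+1}-1}\sin\!\left(\frac{\pi m 2^i g_i}{2^n}\right)$ vanishes for every $i$ with $g_i\neq 0$ (and the summand is identically $0$ when $g_i=0$), so $E[Z_j]=0$ and $E[Y_\ell(g)]=0$. Azuma--Hoeffding then yields $\Pr\!\left(|Y_t(g)|\ge t\sqrt\epsilon\right)\le 2\exp\!\left(-\tfrac12 \epsilon t\right)$, the same tail already available for the cosine sum.

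I would then combine the two estimates by a union bound. There are fewer than $2^{n+1}$ nonzero values of $g=x-y$, and for each we must control two events, so the probability that either inequality is violated for some $g\neq 0$ is at most $2^{n+1}\cdot 2\cdot 2\exp(-\tfrac12\epsilon t)=2^{n+3}\exp(-\tfrac12\epsilon t)$. Forcing this below $1$ gives $\tfrac12\epsilon t>(n+3)\ln 2$, and since $\ln 2<1$ the choice $t=\frac2\epsilon(n+3)$ suffices; this is precisely where the constant $(n+3)$ in the statement comes from. As the total failure probability is then strictly less than $1$, some realization of $\{m_j\}$ satisfies both bounds, which proves the theorem.

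The main obstacle is the mean-zero verification for the second sum, i.e.\ showing that the sine terms cancel on average. Unlike the first inequality, whose cancellation is inherited directly from Theorem~\ref{thm:funny_theorem}, here one must check that sampling $m_j$ over a \emph{full} residue period is exactly what forces $E[Z_j]=0$, and that $e^{i\pi 2^i g_i/2^n}\neq 1$ for the relevant indices so the geometric sum genuinely telescopes to zero. The bounded-increment estimate $|Z_j|\le 1$ rests squarely on the hypothesis $|\delta_{ij}|\le \tfrac1n$, which explains why that assumption appears and why it is the natural threshold. A secondary but essential point is that both inequalities must hold for one and the same set $\{m_j\}$, not for two separately chosen sets; the union bound over both events under a single random draw is what resolves this.
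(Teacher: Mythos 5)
Your proposal is correct and follows essentially the same route as the paper's own proof: choose the $m_j$ uniformly at random, apply Azuma--Hoeffding to the cosine martingale and to the sine martingale (whose increments are bounded by $n\cdot\frac1n=1$ thanks to the hypothesis $|\delta_{ij}|\le\frac1n$), and take a union bound over both events and all nonzero $g$ to get total failure probability below $1$ for $t=\frac{2}{\epsilon}(n+3)$. Your explicit verification that $E[Z_j]=0$ via the full-period root-of-unity cancellation is a detail the paper merely asserts, but it does not change the argument.
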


Here, we prove it via a method similar to that of Theorem \ref{thm:funny_theorem}, shown in \cite{ablayev2008complexity}.

The algorithm for the Equality problem will succeed as long as we reach an accuracy of $\frac 1 n$ for a suitably chosen set of $O(n)$ elements. In order to achieve such accuracy, we need at most $O(\log^4 n)$ quantum gates according to the Solovay-Kitaev theorem. Since we need to apply at least one quantum gate in order to be able to implement an operator, it is reasonable to assume that we need at least $\Omega(1)$ quantum gates for each operator to achieve such accuracy. 

Now we can analyze the space complexity, for which we also take into account classical bits. When we perform the above algorithm we need to store the set $\{m_j\}$, since the set $\{m_j\}$ is not chosen arbitrarily. There are two natural ways to do so. One way is to store $\{m_j\}$ directly: consider $m_j$ that range from $0$ to $2^n$, and thus need $n$ classical bits. We have $n$ such integers, and thus at least $\Omega(n^2)$ bits are needed. This strategy requires even more bits than a classical brute force method which saves all $O(n)$ bits of the input. The second way is to store $\{R(\theta_{ij})\}$: note that $R(\theta_{ij})$ need at least $\Omega(1)$ quantum gates for each operator, and thus each need $\Omega(1)$ classical bits. Since we have $n^2$ such operators in our algorithm, at least $\Omega(n^2)$ bits of storage are needed, which is more than that in the classical deterministic algorithm. In the following theorem, we provide a more rigorous proof. 

\begin{theorem}
At least $\Omega(n^2)$ bits are needed in order to store a set $\left\{m_j, j=1...t \right\}$ where $m_j\in [0,2^n-1]$ and $t=\frac{2}{\epsilon}(n+3)$ without pre-knowledge of the set.
\end{theorem}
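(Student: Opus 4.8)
The plan is to establish the lower bound through a counting (information-theoretic) argument. The statement asks us to show that storing an arbitrary set $\{m_j, j=1,\dots,t\}$ with $m_j \in [0, 2^n-1]$ and $t = \frac{2}{\epsilon}(n+3)$ requires $\Omega(n^2)$ bits when no prior knowledge of the set is available. First I would count the number of distinct such sets. Since each $m_j$ is drawn from $2^n$ possible values and we select $t$ of them (and since the construction in Theorem~\ref{thm:funny_theorem} and Theorem~\ref{thm:EQ_Upper_Bound} treats these as an unordered collection, though treating them as ordered only changes the count by a factor whose logarithm is lower-order), the number of possible configurations is on the order of $\binom{2^n}{t}$ or $(2^n)^t$ depending on whether repetition and order matter. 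The key step is to take the base-two logarithm of this count: any encoding scheme that can losslessly represent every such configuration must use at least $\log_2$ of the number of configurations many bits, by the pigeonhole principle.

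The core calculation is then $\log_2\!\left((2^n)^t\right) = t \cdot n = \frac{2}{\epsilon}(n+3)\cdot n = \Omega(n^2)$, since $\epsilon$ is treated as a constant (or at least does not scale with $n$). If instead one uses the binomial coefficient, I would invoke the standard bound $\log_2 \binom{2^n}{t} \geq t \log_2\frac{2^n}{t} = t\bigl(n - \log_2 t\bigr)$, and since $t = O(n/\epsilon)$ gives $\log_2 t = O(\log n)$, this is still $t \cdot n (1 - o(1)) = \Omega(n^2)$. Either route yields the same asymptotic conclusion. The crucial conceptual point, which I would state explicitly, is the phrase ``without pre-knowledge of the set'': this rules out the trivial objection that one could hardwire a single good set into the algorithm using no memory at all. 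Because the universal quantum computer must be prepared to answer Equality instances of any scale $n$ and with any parameter, it cannot exploit knowledge of which particular set it will need, and hence must be able to distinguish among all admissible sets, forcing the full $\log_2$(number of sets) storage requirement.

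I expect the main obstacle to be making the ``without pre-knowledge'' hypothesis rigorous rather than the counting itself, which is routine. Specifically, I would need to argue carefully that the problem genuinely requires representing a generic element of the configuration space, as opposed to merely one fixed representative that happens to work for all inputs. The subtlety is that Theorem~\ref{thm:funny_theorem} guarantees \emph{existence} of a good set, and a cleverly designed algorithm might fix one such set once and for all; the lower bound therefore only applies under a model where the set must be stored explicitly because it is not known in advance (for instance, because it is generated fresh by the randomized selection procedure for each instance, or because the adversary may demand any valid set). I would make this precise by framing the requirement as: the memory must be able to encode any of the $\Theta\!\left(\binom{2^n}{t}\right)$ valid sets, and then the bound follows. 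A secondary, minor point I would verify is that the dependence on $\epsilon$ is handled consistently with the rest of the section (where $\epsilon$ is a fixed target error), so that the $\Omega(n^2)$ claim holds with the implied constant absorbing $1/\epsilon$.
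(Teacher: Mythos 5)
Your proposal is correct and takes essentially the same approach as the paper: both arguments count the number of admissible sets (the paper via the information entropy $\ln\binom{2^n}{t}$ with Stirling-type approximations, you via $\log_2\binom{2^n}{t}\geq t(n-\log_2 t)$), and both conclude $\Omega(n^2)$ from the pigeonhole/entropy lower bound. The only step the paper includes that you omit is a closing appeal to the no-programming theorem to argue that storing the set in qubits instead of classical bits yields no savings, so the bound also holds for quantum program memory; your discussion of the ``without pre-knowledge'' hypothesis is a reasonable (and arguably more careful) substitute for the paper's implicit treatment of that modeling assumption.
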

\begin{proof}
We first consider the classical case. The number of possible choices in the classical case is 
\begin{equation*}
    C_{t}^{2^n} = \frac{2^n!}{t!\cdot (2^n-t)!}.
\end{equation*}
The information entropy of knowing a certain choice from all possible choices with equal possibility is
$ S=\ln\left(C_{t}^{2^n}\right).$
Consider when $n$ is sufficiently large, $2^n \gg t=\frac{2}{\epsilon}(n+3)$, use $\ln(1+x)\approx x$ an $\ln(x!)\approx x\ln(x)-x$, we have
\begin{equation*}
    S=\ln(2^n)+...+\ln(2^n-t+1)-\ln(t!)\approx nt-\frac{t^2}{2^n}-t\ln(t)+t=O(n^2).
\end{equation*}
Since the number of bits required is linearly dependent on the information entropy, $O(n^2)$ bits are needed in order to store this set. 
    
We next consider the quantum case. The set $\left\{m_j, j=1...t,m_j \in [0,2^n-1]  \right\}$ is used to program our quantum computer. Due to the quantum no-programming theorem in \ref{sec:no-programming}, quantum programs have no advantage over the classical program with respect to space complexity. Therefore, $\Omega(n^2)$ bits or qubits are needed to store this set.
\end{proof}

Therefore, the considered algorithm for the Equality problem has no advantage over the classical deterministic algorithm.

\section{Conclusion}

Based on the Solovay-Kitaev algorithm, we investigated the space complexity of streaming algorithms on a universal computer when only a finite number of quantum gates are available. We used the PartialMOD problem and the Equality problem to analyze the quantum streaming algorithms in systems where classical bits are used in order to control quantum gates. By applying the Solovay-Kitaev algorithm we concluded that the considered quantum streaming algorithms do not beat their classical counterparts in this system. 

Our work shows that not all quantum streaming algorithms can perform well on a universal quantum computer. There are also data stream problems for which quantum algorithms may perform well on a universal quantum computer. One example is the variation of the Equality problem proposed in \cite{LeGall2009}. In this problem, the input is repeated many times, which is different from the Equality problem discussed in this paper, where we receive the input only once. Another possible candidate is the problem based on the universal $(\epsilon,l,m)$-code of matrices proposed by Sauerhoff et al. in \cite{SAUERHOFF2005177} and Gavinsky et al. \cite{Gavinsky:2007:ESO:1250790.1250866}, where the input directly corresponds to a quantum gate array, and one can therefore save space when storing quantum gates for application. By comparing these algorithms, we conclude that a framework can be extremely efficient for a certain set of problems and corresponding algorithms, but not necessarily for all problems. We therefore think that the space complexity of algorithms should be analyzed with respect to the framework of the quantum computer in which they can be implemented.

\bibliography{literature}

\newpage

\appendix

\renewcommand{\theequation}{\thesection.\arabic{equation}}
\section{Universality of the Standard Set}\label{app:Univ_Stand_Set}

The standard set consists of Hadamard, phase, controlled-NOT and $\pi /8$ gates. Here, the phase gate is actually the square of the $\pi /8$ gate. However, they are both included in the standard set for fault-tolerant reasons. 

Following the proofs by Nielsen and Chuang \cite{nielsen_chuang_2010}, here we briefly explain why the standard set is universal. First, we can use controlled-NOT gates to entangle two qubits. Combined with all single-qubit gates, any two-qubit gates can be realized. Multi-qubit gates can be realized by composition of two-qubit gates applying on any pair of qubits. Now consider therefore single-qubit operators. Using the Hadamard and $\pi/8$ gate, rotations around the $x$-, $y$- and $z$-axis by $\pi/4$ and $\pi/2$ on the Bloch sphere can be constructed. A rotation by $\pi/4$ around the $z$-axis followed by a rotation by $\pi/4$ around the $x$-axis is equivalent to the rotation by $\lambda \pi$ around the $a$-axis, where $a=(\cos \frac {\pi} {8},\sin \frac {\pi} {8},\cos \frac {\pi} {8})$ and $\lambda$ is an irrational number as proven by Boykin et al. \cite{814621}. If we repeat these rotations, we can approximate any rotation around $a$, since $\lambda$ is irrational. We can then use rotations by $\pi/2$ around the $x$-, $y$- and $z$-axes and by any angle around the $a$ axis in order to approximate any possible rotation. 

\section{Proof of the Solovay-Kitaev Theorem}\label{app:Solovay_Algo_Proof}
In the following, we present a brief summary of the Solovay-Kitaev algorithm. Let $A$ and $B$ be two unitary operators close to the identity, and $\Delta A_0$ and $\Delta B_0$ be the deviations of $A$ and $B$ from the identity, where $\Delta$ denotes the order of the deviation. 
\begin{equation*}
    A=I+\Delta A_0;\ B=I+\Delta B_0
\end{equation*}
Observe that for their commutator holds 
\begin{equation}\label{equation:solovay-kitaev_1}
    ABA^\dagger B^\dagger =I+O(\Delta^2)+O(\Delta^3)
\end{equation}
Suppose now that we can approximate any quantum gate to a basic accuracy of $\epsilon_0$. Let $L_0$ be the length of the quantum gate array that we need to achieve this accuracy. We also have a recursion $U_n=REC(U,n)$ to approximate any operator $U$ to an accuracy of $\epsilon_n$ within $n$ steps. Let $L_n$ be the length of the corresponding quantum gate array. We need to find $V$ and $W$ such that 
\begin{equation*}
    UU_n^\dagger=I+O(\epsilon_n) = VWV^\dagger W^\dagger+O(\Delta_n^3)
\end{equation*}
where $\Delta_n=O(\sqrt{\epsilon_n})$. 
\begin{equation}\label{equation:solovay-kitaev_2}
    VWV^\dagger W^\dagger=I+O(\Delta_n^2)-O(\Delta_n^3)
\end{equation}

Observe by comparing (\ref{equation:solovay-kitaev_1}) and (\ref{equation:solovay-kitaev_2}), that the deviation of $V$ and $W$ from the identity is $O(\Delta_n)$. For a detailed construction of $V$ and $W$ readers may refer to Dawson et al.\ \cite{Dawson:2006:SA:2011679.2011685}. In general, $V$ and $W$ should also be approximated by quantum gate arrays. We use the $n$-step recursion to approximate $V$ and $W$ to an accuracy of $\epsilon_n$. We therefore replace $V$ and $W$ by $V_n$ and $W_n$. The deviation of $V_n W_n V_n^\dagger W_n^\dagger$ from $VWV^\dagger W^\dagger$ is
\begin{equation*}
    V_n W_n V_n^\dagger W_n^\dagger=VWV^\dagger W^\dagger+O(\Delta_n^3)
\end{equation*}   
Finally we have 
\begin{equation*}
    U=V_n W_n V_n^\dagger W_n^\dagger U_n+O(\epsilon_n^\frac {3} {2}) =U_{n+1}+O(\epsilon_{n+1})
\end{equation*}
Then, after the $n$-th recursion, we have 
\begin{equation*}
    \epsilon_n=O(\epsilon_0^{{\frac 3 2}^n});\ L_n=5^n O(L_0)
\end{equation*}
which implies 
\begin{equation*}
    L_n=C\cdot \ln^c\left(\frac {1}{\epsilon_n}\right)
\end{equation*}
where $c=\frac {\log 5} {\log 3/2}\approx 4$. 

Then, we can approximate any rotation on a single qubit to an accuracy of $\epsilon$ using $O(\log^c \frac 1 \epsilon )$ quantum gates, where $c$ varies from $1$ to $4$ depending on the structure of the chosen algorithm. 

There is also a physical image that explains the logarithmic relation between $L_n$ and $\epsilon_n$. For each operator that can be applied exactly, there is a corresponding point on the Bloch sphere representing it. We can use this operator to approximate all operators within its $\epsilon_n$ radius. Equivalently, we can use a point to approximate all the points within its radius $\epsilon_n$ neighborhood on the Bloch sphere. To approximate all unitary operators, we use circles of radius $\epsilon_n$ to cover the Bloch sphere. Thus we need at least $O(\frac{1}{\epsilon_n^2})$ different circles, or $O(\frac{1}{\epsilon_n^2})$ different operators respectively.
 
A quantum gate array of no more than $L_n$ gates can give at most $2^{O(L_n)}$ different operators. In order to give an interpretation, we consider a universal set of $2^c-1$ gates and a number of $c\cdot L_n$ classical bits. There are $2^{c\cdot L_n}$ possible different numbers we can choose. Each number corresponds to an arrangement of $L_n$ quantum gates: every bit corresponds to a possible operation, for example, we use $0$ to represent that we do nothing and $1$ to $2^c-1$ to represent gates from the universal set. Thus, all possible $c\cdot L_n$-bit numbers corresponds to all possible quantum gate arrays of no more than $L_n$ gates. Different quantum gate arrays may correspond to the same unitary operator, for example, two Hadamard gates are the same as the identity. Thus, an array of at most $L_n$ gates can give at most $2^{O(L_n)}$ different operators.  

Now we can combine the above facts. In order to cover the Bloch sphere, we need $\frac{1}{\epsilon_n^2}$ different operators and thus a quantum gate array of $L_n=O(\log{\frac{1}{\epsilon_n}})$ gates.

\section{Proof of Theorem \ref{thm:approximation}}\label{app:Thm_Approx_Proof}

We will start by presenting the main idea of our proof. Similar to Harrow's proof of the optimality of the Solovay-Kitaev theorem \ref{sec:solovay-kitaev}, we can use segments to cover some points on a section of the Bloch sphere. Midpoints of segments correspond to operators that we can apply accurately, length of segments corresponds to accuracy we need and the points correspond to operators needed in the algorithm for PartialMOD. When a segment cover a point, we can approximate the operator at the point by the operator at the midpoint of the segment. Given the length of segments, we estimate the minimum number of segments, which is also the number of quantum gates that we should apply accurately. Because given some universal set of $N$ gates, we can apply at most $N^L$ different quantum gate arrays with $L$ quantum gates. Thus the logarithm of the minimum number of segments is the length of quantum gate arrays we need. 

In order to prove the theorem, assume for contradiction that such an algorithm exists. We are supposed to approximate all $R(\theta_p)$. Assume that $p$ is sufficiently large and $\epsilon$ is sufficiently small. Assume moreover that we can apply some set of unitary operators accurately, we denote these operators as $R(\theta^i)$. This implies that we can use $R(\theta^i)$ to approximate the quantum gate $R(\theta_p)$, where $\theta_p\in [\theta^i-\epsilon_p, \theta^i+\epsilon_p]$. Here, different $R(\theta_p)$'s can be approximated by the same $R(\theta^i)$ as long as they are in $[\theta^i-\epsilon_p, \theta^i+\epsilon_p]$'s, since
\begin{equation*}
    \theta_p-\theta_{p+1}=\frac {\pi} {2p^2}\ll\epsilon_p=O\left(\frac {\epsilon }{2p}\right)
\end{equation*}
Formally, we denote $R(\theta^i)$ as $R(\frac {\pi}{2q_i})$, and let $\epsilon_i=\frac {\epsilon}{2q_i}$, where $q_i=\frac{\pi}{\theta^i}$ is a suitable real number such that $[\theta^i-\epsilon_p, \theta^i+\epsilon_p]$ covers its neighbouring $\theta_p,\ p\in Z$, and all $\{R(\theta^i)\}$ combined can approximate all unitary operators needed. In order to satisfy this requirement, edges of two nearby segments $R(\theta^i)$ and $R(\theta^{i+1})$ must meet with each other, and therefore the largest distance between them is $\theta^{i}-\theta^{i+1}=\epsilon_{i}+\epsilon_{i+1}$. With this equality we obtain the following recursive formula:
\begin{equation*}
    q_{i+1}=\frac {1+\epsilon}{1-\epsilon}q_i=\left(\frac {1+\epsilon}{1-\epsilon}\right)^i q_1\approx e^{2\epsilon i} q_1
\end{equation*}
Suppose we need to approximate $R(\theta_p),\ \theta_p\in [-\pi,\pi]$, which shows that we have to represent $O(\frac {1}{\epsilon}\log p_0)$ unitary operators accurately. Since a quantum gate array of $n$ gates from a finite universal gate set can give at most $2^{O(n)}$ different unitary operators, as is discussed in \ref{sec:solovay-kitaev}, we must use at least $O(\log (\frac {1}{\epsilon}\log p_0))$ quantum gates. 

One should note that our proof is general for any universal set of quantum gates, because there's no assumption on what universal set we use in the proof. 

\section{Proof of Theorem \ref{thm:another_approximation}}\label{app:Ano_Approx_Proof}

Assume for contradiction that such an algorithm exists, Therefore, we can represent $R(\theta_p)$ to any accuracy of $\epsilon_p$ using up to $p(\log\frac{1}{\epsilon_p})$ quantum gates. We now consider the distance between two adjacent angle
\begin{equation*}
    \theta_p - \theta_{p+1} \approx \frac{\pi}{2p^2}
\end{equation*}
We choose $\epsilon_p=\frac{\pi}{2p^2}$. In this way, we can approximate any operator around $\theta_p$ to an accuracy of $\delta= \frac{2}{\pi}\theta_p^2$ with $o(\log \frac{1}{\delta})$ quantum gates. 

We thus use a similar but simpler recursion with the Solovay-Kitaev algorithm. Suppose we can approximate any operator $U$ by $U_0$ to some basic accuracy $\delta_0$ with an quantum gate array of $L_0$. Find the nearest $R(\theta_{p_1})$ to $R(\delta_0)$. Using the conclusion last paragraph, we can approximate $R(\delta_0)$ to an accuracy of $\delta_1\approx \frac{2}{\pi}\delta_0^2$ with $o(\log\frac{1}{\delta_1})$. After applying $R(\theta_{p_1})$, we approximate $U$ by $R(\theta_{p_1})U_0$ to an accuracy of $\delta_1=\frac{2}{\pi} \delta_0^2$, with  $L_0+o(\log{\frac{1}{\delta_1}})$ quantum gates. After $n$ steps that are similar to the first step, we approximate $U$ by $R(\theta_{p_n})...R(\theta_{p_1})U_0$ to an accuracy of 
\begin{equation*}
    \delta_n=\frac{2}{\pi}\delta_{n-1}^2=(\frac{2}{\pi})^{2^n-1}\delta_0^{2^n}
\end{equation*}
with a quantum gate array of
\begin{equation*}
      L_n=L_{n-1}+o(\log\frac{1}{\delta_n})\approx L_0+\sum_{m=1}^{n} o(\log{\frac{1}{\delta_m}})\approx o(2\log\frac{1}{\delta_n})
\end{equation*}
The above calculation shows that in this way, any gate can be approximate to an accuracy of $\epsilon$ with $o(\log(\frac{1}{\epsilon}))$ quantum gates. This is a contradiction to the optimality of the Solovay-Kitaev algorithm, which states that we can approximate any unitary operator to an accuracy of $\epsilon_n$ with $L_n=O(\log \frac 1 {\epsilon_n})$ quantum gates. 

Therefore, there is no algorithm that can approximate all $R(\theta_p)$ to an accuracy of $\epsilon_p$ with $o(\log \frac 1 {\epsilon_p})$ quantum gates, which concludes the proof. 

\section{Proof of Theorem \ref{thm:EQ_Upper_Bound}}\label{app:EQ_Upper_Bound_Proof}
\begin{proof}
Let $|\delta_{ij}|\leq \delta=\frac 1 n$. Now we choose $k_i,1,...,t=\frac 2 {\epsilon}(n+3)$ uniformly at random from $2^n$ integers of value $0$ to $2^n-1$. Let
\begin{equation*}
    X_l(g)=\sum_{j=1}^{l}{\cos\left(\frac {\pi k_j g}{2^n}\right)};\ 
    Y_l(g)=\sum_{j=1}^{l}\sum_{i}{\sin \left(\frac {\pi k_j 2^i g_i}{2^n}\right)\delta_{ij}}
\end{equation*}
Since $|X_{l+1}(g)-X_l(g)|\leq 1,\ E(X_{l}(g))=0$, and $|Y_{l+1}(g)-Y_l(g)|\leq n\delta,\  E(Y_{l}(g))=0$, the Azuma-Hoeffding theorem \cite{azuma1967} states that for $t\geq \frac {2\ln 2} {\epsilon}(n+3)$ the following two inequalities hold:
\begin{equation*}
    \Pr\left(|X_t(g)|\geq t\sqrt {\epsilon}\right)\leq 2\exp\left(-\frac 1 2 \epsilon t\right)<\frac 1 {2^{n+2}}
\ \ 
\text{and}
\ \ 
\end{equation*}
\noindent and 
\begin{equation*}
    \Pr\left(|Y_t(g)|\geq t\sqrt {\epsilon}\right)\leq 2\exp\left(-\frac {\epsilon t}{2n^2\delta^2}\right)<\frac 1 {2^{n+2}}
\end{equation*}
By combining them, we receive $\Pr\left(|X_t(g)|\geq t\sqrt {\epsilon}\ or \ |Y_t(g)|\geq t\sqrt {\epsilon},\ \forall g\right)< 1 $. This is equivalent to saying 
\begin{equation*}
    \Pr\left(|X_t(g)|\leq t\sqrt {\epsilon}\ and \ |Y_t(g)|\leq t\sqrt {\epsilon},\ \forall g\right)>0
\end{equation*}
Since the possibility that we find a set satisfying our conditions is strictly greater than $0$, there exists a set for which both conditions hold.
\end{proof}
\end{document}